\documentclass[conference]{IEEEtran}
\IEEEoverridecommandlockouts
\usepackage{cite}
\usepackage{amsmath,amssymb,amsfonts}
\usepackage{algorithmic}
\usepackage{graphicx}
\usepackage{textcomp}
\usepackage{xcolor}
\usepackage{stfloats}
\usepackage{bm}
\usepackage{epstopdf}
\usepackage{multirow}
\usepackage{enumitem}
\usepackage{stfloats}
\usepackage{amsthm}
\usepackage[ruled,linesnumbered]{algorithm2e}
\usepackage{amsthm}
\theoremstyle{remark}
\newtheorem{remark}{\textbf{Remark}}
\newtheorem{theorem}{Theorem}

\def\BibTeX{{\rm B\kern-.05em{\sc i\kern-.025em b}\kern-.08em
    T\kern-.1667em\lower.7ex\hbox{E}\kern-.125emX}}

\expandafter\def\expandafter\normalsize\expandafter{%
	\normalsize%
	\setlength\abovedisplayskip{1pt}%
	\setlength\belowdisplayskip{5pt}%
}

\begin{document}
\title{Near-Field Channel Estimation for mmWave/THz Communications with Extremely Large-Scale UPAs }
\author{\IEEEauthorblockN{Yiming Chen, Hongwei Wang, Lingxiang Li, and Zhi Chen}
\thanks{Corresponding author: Hongwei Wang}
\thanks{This work was supported in part by the National Key R\&D Program of China under Grant 2024YFE0200400, the Mobile Information Networks National Science and Technology Major Project under Grants No. 2025ZD1303000 and No. 2025ZD1302000, the Fundamental Research Funds for the Central University under Grant ZYGX2022J029, and the Natural Science Foundation of Sichuan Province under Grant 2025ZNSFSC0514.}
\IEEEauthorblockA{\textit{National Key Laboratory of Wireless Communications} \\
	\textit{University of Electronic Science and Technology of China}, Chengdu, P.R. China\\
	202521220228@std.uestc.edu.cn, \{hongwei\_wang, lingxiang.li, chenzhi\}@uestc.edu.cn     }
	}
\maketitle

\begin{abstract}
Extremely large antenna arrays (ELAAs) are widely adopted in mmWave/THz communications to compensate for the severe path loss, wherein the channel estimation remains a significant challenge since the Rayleigh distance of ELAAs stretches to tens or even hundreds of meters and the near-field channel model should be considered. Existing polar-domain based methods and block-sparse based methods are originally devised for Uniform Linear Arrays (ULAs) near-field channel estimation. The polar-domain based method can be applied to Uniform Planar Arrays (UPAs), but it behaves plain since it ignores the specific sparsity structure of the UPA near-field channels. Meanwhile, the block-sparse based method cannot be extended to the UPA scenarios directly. To address these issues, we first reformulate the original UPA near-field channel as an outer product of two ULA near-field channels and we construct a modified two-dimensional DFT (2D-DFT) dictionary for it. With the proposed dictionary, we further prove that the UPA near-field channel admits a 2D block-sparse structure. Leveraging this specific sparse structure, we solve the channel estimation problem with the 2D Pattern-Coupled Sparse Bayesian Learning (2D-PCSBL) algorithm. Simulation results show that the proposed approach outperforms conventional existing methods while maintaining a comparable computational complexity.
\end{abstract}

\begin{IEEEkeywords}
Near-field channel, extremely-large arrays, uniform planar arrays, 2D block-sparse structure
\end{IEEEkeywords}
\vspace{-2mm}
\section{Introduction}
ELAAs have emerged as a key enabling technology for next-generation communications, which can achieve ultra-high array gains and offer an effective solution for the severe path loss in mmWave/THz communications~\cite{ELAAHybridBeamforming}. 
Unfortunately, accurate channel state information (CSI) is essential for fully unlocking the potential of ELAAs, while acquiring such CSI poses significant practical challenges. As their Rayleigh distance stretches to tens or even hundreds of meters~\cite{cui2022near,ye2024extremely}, the spherical wavefront assumption should be employed to accurately capture the propagation characteristics, thereby yielding a near-field channel determined by both the distance and the angular parameters. In such a case, conventional far-field channel estimation methods such as orthogonal matching pursuit (OMP)~\cite{lee2016channelOMP}, sparse Bayesian learning (SBL)~\cite{EMSBL}, and atomic norm minimization (ANM)~\cite{ANMRIS} are no longer applicable.

 A near-field sparse representation was first proposed in~\cite{LS}, where the dictionary is constructed by jointly sampling the two-dimensional angular-range domain, also referred to as the polar-domain method. It effectively mitigates the energy spread issue induced by the conventional angular-domain transformations. Nevertheless, the number of atoms in this representation far exceeds the channel dimension, which significantly increases the computational complexity. Moreover, since the dictionary used is non-unitary and generally possesses high coherence, it yields a poor restricted isometry property (RIP), thereby degrading the channel estimation performance. To handle these issues, the works in~\cite{wang2024nearfarfieldchannelestimationterahertz, WangGlob} introduce a modified DFT matrix as the dictionary and represent the near-field channel in a block-sparse pattern, based on which methods like block-OMP (BOMP)~\cite{Blocksparsesignal} and Pattern-Coupled Sparse Bayesian Learning (PCSBL)~\cite{1D} can be employed to estimate the near-field channel with reduced pilot overhead.

 All the aforementioned methods are originally designed for ULAs. Although the polar-domain based method can be extended to the UPA scenario, it performs non-competitively in channel estimation accuracy. By contrast, the block-sparse based method, though offering promising estimation performance with low pilot overhead in ULA systems, cannot be directly extended to UPA scenarios.
 In this paper, we aim to exploit the coupling feature of the sparsity across the horizontal and vertical dimensions of the UPA near-field channels to improve channel estimation performance and reduce the required pilot overhead. To the best of our knowledge, this is the first attempt to exploit this coupling feature of the UPA near-field channels.

Our main contributions are summarized as follows:
\begin{itemize}
	\item[(1)] We reformulate the original UPA near-field channel as an outer product of two ULA near-field channels, upon which a modified 2D-DFT matrix is designed to serve as the near-field channel dictionary.
	
	\item[(2)] We prove that under the proposed 2D-DFT dictionary, the UPA near-field channel exhibits a 2D block-sparse structure, based on which we reformulate the UPA channel estimation task as a 2D block-sparse signal recovery problem.
	
	\item[(3)] We solve the reformulated problem using the 2D-PCSBL algorithm~\cite{twodimention}. Simulation results demonstrate that, compared with the existing methods, the proposed approach attains superior channel estimation accuracy and reduced pilot overhead, while maintaining a comparable computational complexity. 
\end{itemize}

\section{Signal Model and Problem Formulation}
\vspace{-2mm}
\subsection{Scenario and Signal Model}
Consider a downlink transmission scenario where a wireless communication system operating at $f_c$ serves multiple single-antenna users. The transmitter is equipped with an ELAA configured as a UPA comprising $N=N_x\times N_y$ antennas, with $N_x$ and $N_y$ being, respectively, the number of antennas along the horizontal and vertical axes. To reduce hardware complexity, the transmitter adopts a hybrid beamforming architecture with $R\ll N$ radio frequency (RF) chains. The carrier wavelength is $\lambda_c=\frac{c}{f_c}$, where $c$ is the speed of light. The inter-element spacing is $d_x=d_y=d=\lambda_c/2$ along both axes. Each user estimates its channel from the pilot signals broadcast by the transmitter during the downlink transmissions.

In this paper, we have the following three assumptions. Firstly,  the users fall within the transmitter's near-field region, and a spherical wavefront assumption is applied. Secondly, the sparse scattering characteristic of mmWave/THz signals leaves only a few dominant paths between the transmitter and the user. Thirdly, due to the limit of space, in this paper, we only focus on the narrowband case and will extend the following proposed framework to broadband near-field channels.

Under these assumptions, the channel between the transmitter and the user, denoted by $\mathbf H \in \mathbb{C}^{N_x \times N_y}$, is modeled as

\begin{align} \label{H_ori}
\mathbf{H}  =\sum_{l=0}^{L-1}g_le^{-j2\pi f_c\tau_l}\mathbf{A}(r_l,\theta_l,\phi_l)=\sum_{l=0}^{L-1}\tilde{g}_l\mathbf{A}(r_l,\theta_l,\phi_l),
\end{align}
where $L$ is the number of propagation paths; $\tilde{g}_l \triangleq g_l e^{-j 2 \pi f_c \tau_l}$, with $g_l$  and $\tau_l$ denoting the complex channel gain and the propagation delay from the transmitter reference antenna to the target of the $l$-th path, respectively; $(r_l,\theta_l,\phi_l)$ denote the distance from the reference antenna at the transmitter to the target, and the elevation and azimuth angles associated with the $l$-th propagation path; and $\mathbf{A}(r, \theta, \phi) \in \mathbb{C}^{N_x \times N_y}$ denotes the near-field steering matrix of the UPA. Specifically, $\mathbf{A}(r, \theta, \phi)$ is expressed as

\begin{align}\label{eq:steering_vector}
&\mathbf{A}(r, \theta, \phi) \nonumber\\
&\  \triangleq \frac{1}{\sqrt{N}}
\begin{bmatrix}
e^{-\frac{j 2\pi}{\lambda_c} \left( r^{(0,0)} - r \right)} & \cdots & e^{-\frac{j 2\pi}{\lambda_c} \left( r^{(N_y-1,0)} - r \right)} \\
\vdots & \ddots & \vdots \\
e^{-\frac{j 2\pi}{\lambda_c} \left( r^{(0,N_x-1)} - r \right)} & \cdots & e^{-\frac{j 2\pi}{\lambda_c} \left( r^{(N_x-1,N_y-1)} - r \right)}
\end{bmatrix},
\end{align}
where $n_x \in \{0,1,\dots,N_x-1\}$ and $n_y \in \{0,1,\dots,N_y-1\}$. $r^{(n_x,n_y)}$ denotes the distance from the antenna at position $(n_x,n_y)$ in the UPA to the target user. $r=r^{(0,0)}$ represents the distance from the reference antenna $(0,0)$ to the user.

The transmitted signal matrix of the UPA array 
at the $t$-th sampling instant, denoted by 
$\mathbf{X}(t) \in \mathbb{C}^{N_x \times N_y}$, can be expressed as
\begin{align}
	\mathbf{X}(t) = \mathbf{B}(t) \odot \mathbf{S}(t),
\end{align}
where $\odot$ represents the Hadamard product. Here, $\mathbf{B}(t)$ denotes the hybrid beamforming matrix satisfying the constant modulus constraint $|\mathbf{B}(i,j)|=1/\sqrt{N}$. $\mathbf{S}(t)$ represents the pilot signal matrix. Define $T$ as the sampling number, without loss of generality, we assume that $\mathbf{S}(t) = \mathbf{1}_{N_x \times N_y}$ for $t=1,2,..., T$, which leads to $\mathbf{X}(t) = \mathbf{B}(t)$. 
Then, the received signal at the $t$-th sampling instant can be written as
\begin{align} \label{receivedyt}
	y(t) = \sum_{i=1}^{N_x}\sum_{j=1}^{N_y} [\mathbf{H}]_{ij}[\mathbf{X}(t)]_{ij}
	= \mathbf{h}^{T}\mathbf{f}(t),
\end{align}
where $\mathbf{h} \triangleq \mathrm{vec}(\mathbf{H})$ and $\mathbf{f}(t) = \mathrm{vec}(\mathbf{B}(t))$. 
By stacking the $y(t)$ over $T$, 
we obtain the received signal vector as
\begin{align} \label{receivedmodelY}
	\mathbf{y} = \mathbf{Fh} + \mathbf{n},
\end{align}
where $\mathbf{y} \triangleq \left[ y(1) \ \cdots y(T) \right]^{T} \in \mathbb{C}^{T}$, $\mathbf{F} \triangleq \left[ \mathbf{f}(1) \ \cdots \ \mathbf{f}(T) \right]^{T} \in \mathbb{C}^{T \times N}$ and $\mathbf{n} \sim \mathcal{CN}(\bm{0}, \sigma_n^2 \bm{I}_T)$ represents additive white Gaussian noise (AWGN).

\subsection{Problem Formulation}
We aim to estimate $\mathbf{h}$ from the received signal $\mathbf{y}$, as follows,
\begin{align}
 \hat{\mathbf{h}} = 
 \arg\min_{\mathbf{h}} \; \|\mathbf{y}-\mathbf{F}\mathbf{h}\|.
\end{align}
  
  Since this problem involves reconstructing a high-dimensional signal $\mathbf{h}$ from a limited observation $\mathbf{y}$, employing the compressive sensing framework offers a more pilot-efficient approach. Unfortunately, the co-existing polar-domain based methods cannot be efficiently applied to UPAs, and the block-sparse based methods remain unexplored in such scenarios.
  
In this paper, we aim to exploit the coupling feature of the sparsity across the horizontal and vertical dimensions of the UPA near-field channels to improve channel estimation performance and reduce required pilot overhead.

\section{Proposed 2D Block-Sparse Aware Channel Estimation Method}
\begin{figure*}[!hb]
	\centering
	\vspace*{2pt}
	\hrulefill
	\vspace*{2pt}
	\setlength{\jot}{-2pt} 
	\begin{align}
		r^{(n_x,n_y)} &=\sqrt{(r\sin\theta\cos\phi)^2+(r\cos\theta-n_xd)^2+(r\sin\theta\sin\phi-n_yd)^2} \notag \\
		&\overset{(a)}{\operatorname*{\operatorname*{\approx}}}r - n_x d \cos\theta - n_y d \sin\theta \sin\phi + \frac{n_x^2 d^2}{2 r} (1 - \cos^2\theta) + \frac{n_y^2 d^2}{2 r} (1 - \sin^2\theta \sin^2\phi) - \frac{n_x n_y d^2 \cos\theta \sin\theta \sin\phi}{r} \notag \\
		&\overset{(b)}{\operatorname*{\operatorname*{\approx}}} r - d \left[ n_x\zeta_a + n_y\zeta_e \right]
		+ \frac{d^2}{2r} \left[ n_x^2 (1 - \zeta_a^2) + n_y^2 (1 - \zeta_e^2)  \right]
		\label{app_distant}
	\end{align}
\end{figure*}
In this section, we first represent the near-field UPA channel $\mathbf{H}$ as the outer product of two near-field ULA channels, based on which we construct 
a modified 2D-DFT matrix, serving as the near-field channel dictionary. Then, by leveraging the 2D block-sparse structure, we solve the channel estimation problem using the 2D-PCSBL algorithm. 

Finally, we analyze the computational complexity of the proposed algorithm and compare it with the UPA-extended polar-domain method and other block-sparse based methods.

\subsection{Sparse Property of the UPA Near-Field Channel}
The steering matrix of the UPA is mainly based on $r^{(n_x, n_y)}$ as shown in \eqref{eq:steering_vector}. Therefore, we firstly investigate the exact expression of $r^{(n_x, n_y)}$ which is given by~\eqref{app_distant} at the bottom of next page, where $\zeta_a \triangleq  \cos\theta$ and $\zeta_e \triangleq \sin\theta \sin\phi$. In (a) of~\eqref{app_distant} we employ the approximation $\sqrt{1+x^2}\approx 1 + 0.5x- 0.125x^2$ which holds true for small $x$, and in (b) we neglect the bilinear quadratic term. Such an approximation holds true in near-field regions, and the detailed explanation is provided in~\cite{Muitipleaccess}.

According to the approximation in~\eqref{app_distant}, the near-field steering matrix $\mathbf{A}(r,\theta,\phi)$ of a UPA can be decomposed as
\begin{align}
	\mathbf{A}(r,\theta,\phi) &\approx \left( \bar{\mathbf{a}}(\zeta_a) \odot \bar{\mathbf{b}}(\zeta_a, r) \right)\left( \bar{\mathbf{a}}(\zeta_e) \odot \bar{\mathbf{b}}(\zeta_e, r) \right)^{T} \notag \\
	&= \mathbf{a} (\zeta_a, r,N_x) \, \mathbf{a} (\zeta_e, r,N_y)^{T}.
	\label{app_steering}
\end{align}
Here, $\bar{\mathbf{a}}(\zeta) \in \mathbb{C}^{N \times 1}$, with $\zeta \in \{\zeta_a, \zeta_e\}$, 
and its $n$th element is defined as 
$[\bar{\mathbf{a}}(\zeta)]_n = e^{j \frac{2\pi}{\lambda_c} n d \zeta}$. 
Similarly, $\bar{\mathbf{b}}(\zeta, r) \in \mathbb{C}^{N \times 1}$ has entries given by 
$[\bar{\mathbf{b}}(\zeta, r)]_n = \frac{1}{\sqrt{N}} e^{-j \frac{2\pi}{\lambda_c} \frac{n^2 d^2}{2r}(1-\zeta^2)}$. 
In addition, $\mathbf{a}(\zeta, r, N) \in \mathbb{C}^{N \times 1}$ can be seen as the near-field steering vector of a ULA with its functional form remaining unchanged except for the parameter $\zeta$.
The $n$-th element of $ \mathbf{a}(\zeta,r,N) $ is given by

\begin{align}
	[\mathbf a(\zeta,r,N)]_{n} 
	\triangleq
	\frac{1}{\sqrt{N}}
	\exp\!\Big(
	-\,j\frac{2\pi}{\lambda_c}
	\big(-n d\zeta + \tfrac{n^2d^2}{2r}(1-\zeta^2)\big)
	\Big).
\end{align}

Substituting  \eqref{app_steering} into \eqref{H_ori}, the channel matrix $\mathbf{H}$ can be rewritten as
\begin{align}
	\mathbf{H}=  \sum_{l=0}^{L-1} \tilde{g}_l \, \mathbf{a}(\zeta_{a,l}, r_l,N_x) \, \mathbf{a}(\zeta_{e,l}, r_l,N_y)^{T},
	\label{eq:H_final}
\end{align}

From \eqref{eq:H_final}, one can see that we rewrite the channel matrix $\mathbf{H}$ as an outer product of two ULA near-field steering vectors.

According to our previous work~\cite{wang2024nearfarfieldchannelestimationterahertz}, a ULA near-field steering vector can be represented in block-sparse form on a specified modified DFT basis. Therefore, we have
\begin{align}
    \mathbf{a}(\zeta_{a,l}, r_l,N_x)&=\tilde{\mathbf{D}}_{x}\bm{\beta}_x,\label{eq:beta_x}\\
    \mathbf{a}(\zeta_{e,l}, r_l,N_y)&=\tilde{\mathbf{D}}_{y}\bm{\beta}_y,\label{eq:beta_y}
\end{align}

Specifically, $\tilde{\mathbf{D}}_x \in \mathbb{C}^{N_x \times N_x}$ and $\tilde{\mathbf{D}}_y \in \mathbb{C}^{N_y \times N_y}$ are the modified DFT matrices, respectively, and are defined as
\begin{align}
\tilde{\mathbf{D}}_{x}&= \mathrm{diag}(\bar{\mathbf{b}}(\zeta_a, r)) \, \mathbf{D}_x,
\label{eq:Dmux}\\
\tilde{\mathbf{D}}_{y} &= \mathrm{diag}(\bar{\mathbf{b}}(\zeta_e, r)) \, \mathbf{D}_y.
\label{eq:Dmuy}
\end{align}
$\bm{\beta}_x$ and $\bm{\beta}_y$ denote two block-sparse vectors, defined as
\begin{align}
\bm{\beta}_x=\tilde{\mathbf{D}}_{x}^{H}\mathbf{a}(\zeta_{a,l}, r_l,N_x)\label{beta_x},\\
\bm{\beta}_y=\tilde{\mathbf{D}}_{y}^{H}\mathbf{a}(\zeta_{e,l}, r_l,N_y)\label{beta_y}.
\end{align}
Substituting~\eqref{eq:beta_x} and~\eqref{eq:beta_y} into~\eqref{eq:H_final}, we arrive at
\begin{align}
    \label{H_Sigma_l}
    \mathbf{H}= \sum_{l=0}^{L-1}\tilde{g}_l\tilde{\mathbf{D}}_{x}\bm{\beta}_x \, (\tilde{\mathbf{D}}_{y}\bm{\beta}_y)^{T} =\tilde{\mathbf{D}}_{x}  \mathbf{\Sigma} \tilde{\mathbf{D}}_{y} ^{T},
\end{align}
where $\mathbf{\Sigma}\triangleq\sum_{l=0}^{L-1}\mathbf{\Sigma}_l $ and
\begin{align}
	\label{Sigmal_l}
   \mathbf{\Sigma}_l\triangleq \tilde{g}_l \, \bm{\beta}_{x,l} \, \bm{\beta}_{y,l}^T.
\end{align}

\begin{theorem}\label{thm:block-cartesian}
	 The matrix $\mathbf{\Sigma}_l$ as an outer product of $\bm{\beta}_{x,l}$ and $\bm{\beta}_{y,l}$ in \eqref{Sigmal_l} would admit a 2D block-sparse structure.
\end{theorem}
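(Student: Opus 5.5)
The plan is to obtain the 2D structure directly from the 1D block-sparsity already established for ULAs in \cite{wang2024nearfarfieldchannelestimationterahertz}, by looking at how supports and energy behave under an outer product.

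\textbf{Step 1: the 1D property.} By \eqref{beta_x}, the vector $\bm{\beta}_{x,l}=\tilde{\mathbf{D}}_x^H\mathbf{a}(\zeta_{a,l},r_l,N_x)$ is the representation of a ULA near-field steering vector in the modified DFT basis \eqref{eq:Dmux}. Hence $\bm{\beta}_{x,l}$ is block-sparse: its dominant energy lies in one contiguous index set $\mathcal{S}_{x,l}$ (modulo $N_x$), centred at the DFT bin nearest $\zeta_{a,l}$. The width $|\mathcal{S}_{x,l}|=K_x\ll N_x$ is controlled by the residual quadratic phase, and the total energy outside $\mathcal{S}_{x,l}$ is at most $\epsilon_x\|\bm{\beta}_{x,l}\|_2^2$. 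The same holds for $\bm{\beta}_{y,l}$, with a block $\mathcal{S}_{y,l}$ of width $K_y$ and leakage $\epsilon_y$.

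I will make explicit how the residual quadratic phase enters. The $\mathrm{diag}(\bar{\mathbf{b}})$ in the dictionary cancels the chirp built with the dictionary's reference parameters $(\zeta,r)$. What is left is a chirp with coefficient proportional to $(1-\zeta_{a,l}^2)/r_l-(1-\zeta_a^2)/r$. Writing $\bm{\beta}_{x,l}$ as the DFT of a linear phase times this small chirp gives a Fresnel-type sum, whose magnitude is spread over roughly $K_x$ adjacent bins and decays outside them. I will use only this qualitative statement and cite it rather than re-derive it.

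\textbf{Step 2: the outer product.} The entries of $\mathbf{\Sigma}_l$ factor as
\begin{align*}
[\mathbf{\Sigma}_l]_{ij}=\tilde{g}_l[\bm{\beta}_{x,l}]_i[\bm{\beta}_{y,l}]_j .
\end{align*}
So the dominant entries lie in the Cartesian product $\mathcal{S}_{x,l}\times\mathcal{S}_{y,l}$, which is a contiguous $K_x\times K_y$ rectangular block (cyclically) in the $N_x\times N_y$ grid.

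For the energy outside the block, I will use
\begin{align*}
\|\mathbf{\Sigma}_l\|_F^2=|\tilde{g}_l|^2\|\bm{\beta}_{x,l}\|_2^2\|\bm{\beta}_{y,l}\|_2^2 .
\end{align*}
The energy inside the block is the product of the two in-block energies, so the out-of-block fraction is $1-(1-\epsilon_x)(1-\epsilon_y)\le\epsilon_x+\epsilon_y$, which is small. This is exactly a 2D block-sparse structure. Moreover, $\mathbf{\Sigma}_l$ has rank one, so the nonzero pattern is coupled along rows and columns, which is the coupling that 2D-PCSBL exploits.

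\textbf{Step 3: extension to $\mathbf{\Sigma}$.} Summing over $L\ll\min(N_x,N_y)$ paths gives at most $L$ such blocks, so $\mathbf{\Sigma}$ is 2D block-sparse with at most $L$ blocks.

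\textbf{Main obstacle.} The delicate point is Step 1: stating the 1D block-sparsity precisely enough that the block width and leakage are well defined. The difficulty is that the dictionary uses fixed reference parameters $(\zeta,r)$ in $\bar{\mathbf{b}}$, while each path has its own $(\zeta_{a,l},r_l)$. I will handle this by invoking the cited ULA result as a black box for any chirp mismatch below the near-field bound. If needed, I will bound the tail by a standard stationary-phase or Fresnel-integral estimate. Steps 2 and 3 are then immediate from the separable structure.
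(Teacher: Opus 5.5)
Your argument is correct, and its central step is the paper's: the entrywise factorization $[\mathbf{\Sigma}_l]_{ij}=\tilde{g}_l[\bm{\beta}_{x,l}]_i[\bm{\beta}_{y,l}]_j$ forces the support of $\mathbf{\Sigma}_l$ to be the Cartesian product of the two 1D supports, i.e.\ a union of rectangles.

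Where you differ is the notion of sparsity. The paper argues with exact supports. Since $\tilde{g}_l\neq 0$, an entry is nonzero iff both factors are, so $\operatorname{supp}(\mathbf{\Sigma}_l)=\operatorname{supp}(\bm{\beta}_{x,l})\times\operatorname{supp}(\bm{\beta}_{y,l})=\bigcup_{p\in\mathcal{I}_X,\,q\in\mathcal{I}_Y}K_{x,p}\times K_{y,q}$. That is a one-line argument, but taken literally it presupposes exact zeros in $\bm{\beta}_{x,l}=\tilde{\mathbf{D}}_x^H\mathbf{a}(\zeta_{a,l},r_l,N_x)$. The DFT of an off-grid linear phase times a residual chirp generically has no such zeros.

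Your energy version, with out-of-block fraction $1-(1-\epsilon_x)(1-\epsilon_y)\le\epsilon_x+\epsilon_y$, is the statement that actually survives approximate block-sparsity. It contains the paper's statement as the case $\epsilon_x=\epsilon_y=0$. The cost is that you need the 1D result in quantified form (width $K_x$, leakage $\epsilon_x$), which, like the paper, you import from the ULA work rather than prove. You also make explicit the mismatch between the dictionary's fixed reference $(\zeta_a,r)$ and the per-path parameters $(\zeta_{a,l},r_l)$, which the paper leaves implicit.

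One small inaccuracy in your Step 1. The reference element sits at the array edge ($n=0$), so the residual chirp's instantaneous frequency starts at the bin of $\zeta_{a,l}$ and drifts to one side. When the residual spans several bins, that bin therefore lies near one end of $\mathcal{S}_{x,l}$ rather than at its centre. Nothing in your argument depends on this, since you only use the existence of a contiguous block.
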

\begin{proof}
Let the block partition of $\bm{\beta}_{x,l}$ and $\bm{\beta}_{y,l}$ be
$\mathcal{B}x = \{K_{x,1}, \ldots, K_{x,S_x}\}$, $\mathcal{B}y = \{K_{y,1}, \ldots, K_{y,S_y}\}$, respectively.
The nonzero block index set of $\bm{\beta}_{x,l}$ and $\bm{\beta}_{y,l}$ are $\mathcal{I}_X \subseteq \{1, \ldots, S_x\}$ and $\mathcal{I}_Y \subseteq \{1, \ldots, S_y\}$.
From \eqref{Sigmal_l}, we have
\begin{align}
	\left|(\mathbf{\Sigma}_l)_{i,j}\right|=\left|\tilde{g}_l\right|\left|(\bm{\beta}_{x,l})_i\right|\cdot\left|(\bm{\beta}_{y,l})_j\right|.
\end{align}
Because $\left|\tilde{g}_l\right|$ is nonzero, $\left|(\mathbf{\Sigma}_l)_{i,j}\right|$ is nonzero if and only if $\left|(\bm{\beta}_{x,l})_i\right| \neq 0$ and $\left|(\bm{\beta}_{j,l})_j\right| \neq 0$, which indicates that the support set of $\mathbf \Sigma_l $ is
\begin{align}
	\operatorname{supp}(\mathbf{\Sigma}_l) &= \operatorname{supp}(\bm{\beta}_{x,l}) \times  \operatorname{supp}(\bm{\beta}_{y,l}) \nonumber \\
	&= \bigcup_{p\in\mathcal{I}_X} \; \bigcup_{q\in\mathcal{I}_Y}\{  K_{x,p}\times K_{y,q} \},
	\label{sigma_supp}
\end{align}
where $\times$ in~\eqref{sigma_supp} denotes the Cartesian product for the set. 
Since the nonzero support of $\mathbf{\Sigma}_l$ is determined by $\operatorname{supp}(\bm{\beta}_{x,l}) \times \operatorname{supp}(\bm{\beta}_{y,l})$, $\mathbf{\Sigma}_l$ can be regarded as a matrix consisting of multiple two-dimensional rectangular nonzero blocks. Therefore, $\mathbf{\Sigma}_l$ exhibits a 2D block-sparse structure.
\end{proof}
To illustrate the sparse structure, we simulate a $64\times64$ UPA with $L=3$, consisting of one line-of-sight (LoS) and two non-line-of-sight (NLoS) paths. The sparsity of $\bm{\Sigma}$ is shown in Fig.~\ref{fig:sparsity}, which clearly exhibits a 2D block-sparse structure under the derived dictionaries $\tilde{\mathbf{D}}_{x}$ and $\tilde{\mathbf{D}}_{y}$.
\begin{remark}
Due to the sparse scattering characteristic of mmWave/THz signals, $\mathbf{\Sigma}$ is primarily determined by the LoS path and the corresponding $\mathbf{\Sigma}_0$. Therefore, $\mathbf{\Sigma}$ admits a similar 2D block-sparse structure.
\begin{figure}[h]
	\centering
	\includegraphics[width=0.80\linewidth]{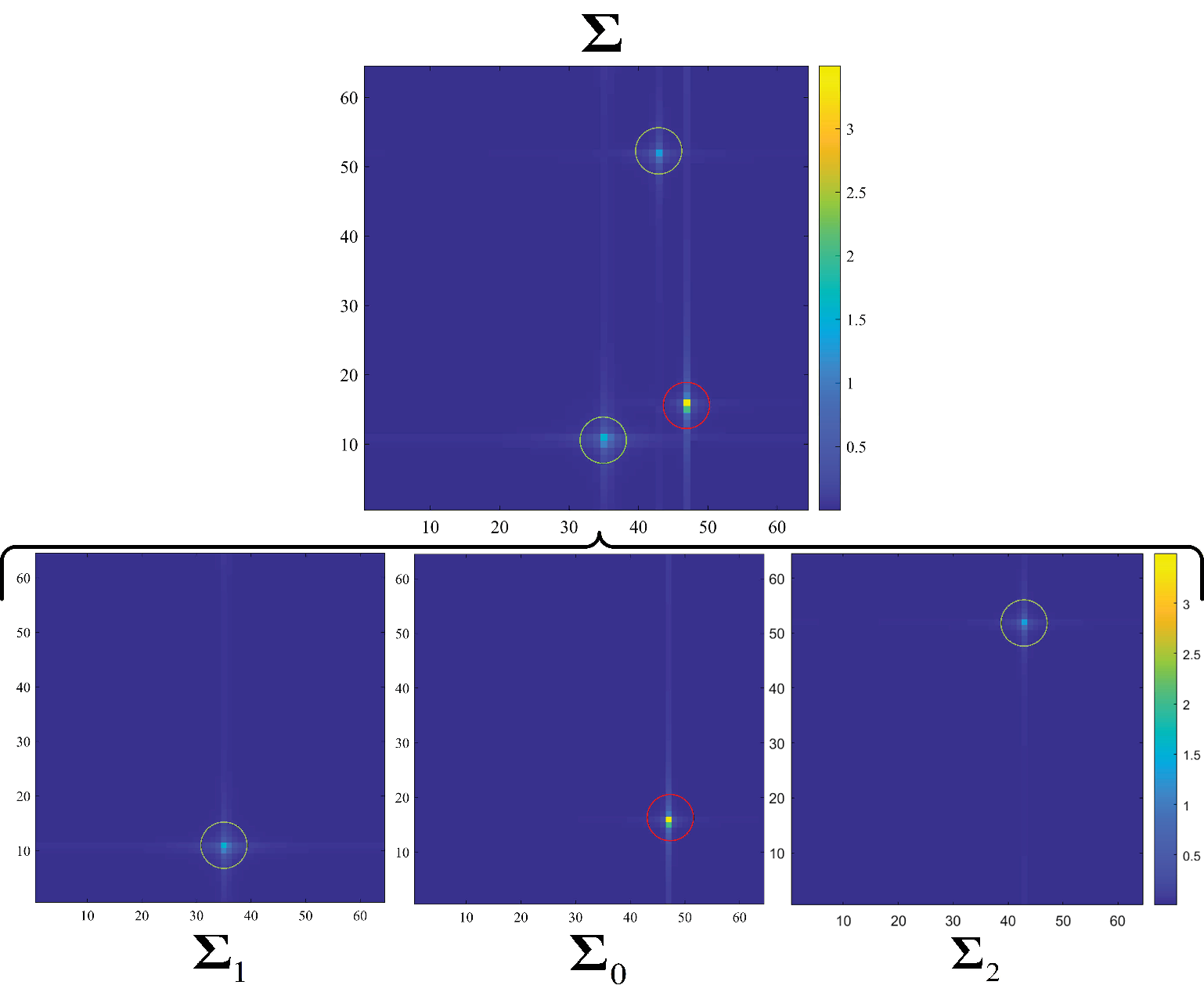}
	\caption{The 2D sparse structure of $\bm{\Sigma}$}
	\label{fig:sparsity}
\end{figure}

\end{remark}

\vspace{-2mm}
\subsection{2D Block-Sparse Aware Channel Estimations}
Based on \eqref{H_Sigma_l} the vectorized channel vector $\mathbf{h}$ is written as
\begin{align} \label{estimateh}
	\mathbf{h}=(\tilde{\mathbf{D}}_{y}\otimes\tilde{\mathbf{D}}_{x})\text{vec}{(\mathbf{\Sigma})} =\tilde{\mathbf{D}}\bm{\beta}, 
\end{align}
where $\bm{\beta} \triangleq \text{vec}{(\mathbf{\Sigma})}$ and 
\begin{align} \label{Dic}
	\tilde{\mathbf{D}} \triangleq \tilde{\mathbf{D}}_{y}\otimes\tilde{\mathbf{D}}_{x}.
\end{align} 
Substituting \eqref{estimateh} into \eqref{receivedmodelY}, the received signal is rewritten as 
\begin{align} \label{compresssensing}
	\mathbf{y} =\mathbf{F}\tilde{\mathbf{D}}\bm{\beta}+\mathbf{n}  =\mathbf{\Phi}\bm{\beta}+\mathbf{n},
\end{align}
where 
\begin{align} \label{Phi}
	\mathbf{\Phi} \triangleq \mathbf{F}\tilde{\mathbf{D}}.
\end{align} 
$\mathbf{\Phi}$ denotes the measurement matrix. Certainly, $\bm{\beta}$ is a 2D block-sparse vector derived from $\mathbf \Sigma$. The UPA channel estimation problem is now transformed into a 2D block-sparse compressive sensing problem.

It is worth noting that, based on the proposed novel dictionary $\tilde{\bm{D}}$ and the corresponding measurement matrix, the compress sensing problem in \eqref{compresssensing} can be solved by BOMP or PCSBL~\cite{Blocksparsesignal,1D}. However, neither BOMP or PCSBL can exploit the 2D block-sparse structure in $\bm{\beta}$. 

In this paper, we propose a 2D block-sparse aware channel estimation algorithm by applying the 2D-PCSBL method~\cite{twodimention}. Specifically, we aim to estimate the posterior distribution $p(\bm{\beta}|\mathbf{y},\bm{\alpha})$ and maximize it with respect to $\bm{\beta}$. In the posterior distribution $p(\bm{\beta}|\mathbf{y},\bm{\alpha})$, the sparsity of $\beta_n \triangleq \bm{\beta}[n]$ is jointly determined by its hyperparameter $\alpha_n \triangleq \bm{\alpha}[n]$ and its neighbor's hyperparameters. Details on the 2D-PCSBL algorithm are as follows.

As the observation $\mathbf{y}$ is conditionally independent of $\bm{\alpha}$ given $\bm{\beta}$, the posterior distribution $p(\bm{\beta}|\mathbf{y},\bm{\alpha})$ is given by
\begin{align}
	\label{beysian}
	p(\bm{\beta}|\mathbf{y},\bm{\alpha}) 
	\propto p(\mathbf{y}|\bm{\beta},\bm{\alpha})\,p(\bm{\beta}|\bm{\alpha})=p(\mathbf{y}|\bm{\beta})\,p(\bm{\beta}|\bm{\alpha})
	,
\end{align}
where $p(\mathbf{y}|\bm{\beta})$ and $p(\bm{\beta}|\bm{\alpha})$ denote the likelihood and prior distributions, respectively.

Since $\bm{\beta}$ is a 2D block-sparse vector derived from $\mathbf{\Sigma}$,
the prior $p(\bm{\beta}|\bm{\alpha})$ is modeled as a zero-mean complex Gaussian distribution with its variance coupled to the hyperparameter $\bm{\alpha}$ in two dimensions, given as
\begin{align} \label{betaprior}
	p(\bm{\beta}|\bm{\alpha}) 
	= \prod_{n=1}^{N} \mathcal{CN}\!\left(\beta_n \mid 0, \eta_n^{-1}\right),
\end{align}
where
\begin{align}
	\eta_n &\triangleq \alpha_n + \rho \sum_{k \in \mathcal{G}_{(n)}} \alpha_k, \label{eta} \\ 
	p(\bm{\alpha})&=\prod_{n=1}^N\mathrm{Gamma}(\alpha_n|a,b).\label{gamma}
\end{align}
Here, $p(\bm{\alpha})$ follows a Gamma distribution as in \eqref{gamma} with hyperpriors $a>1$ and $b=10^{-6}$. $\mathcal{G}_{(n)}$ represents the indices of hyperparameters that neighbor $\bm{\alpha}_n$, i.e., the indices of hyperparameters adjacent to $\bm{\alpha}_{i,j}$ on the 2D grid illustrated in Fig.~\ref{fig:twoD}. The parameter $\rho \in [0,1]$ denotes the correlation coefficient between $\bm{\alpha}_n$ and $\bm{\alpha}_k$, $k \in \mathcal{G}_{(n)}$.
The symbol $\lceil * \rceil$ denotes the ceiling operator.

\begin{figure}[h]
	\centering
	\includegraphics[width=0.85\linewidth]{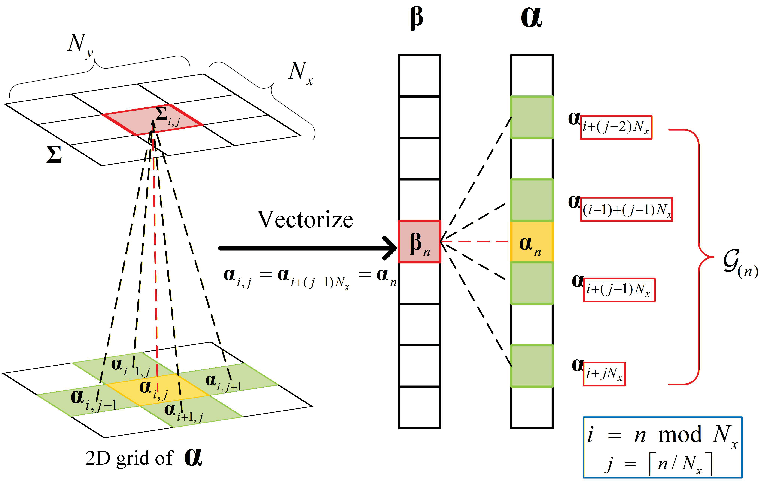}
	\caption{The 2D pattern-coupled structure of $\bm{\beta}$}
	\label{fig:twoD}
	\vspace{-4mm}
\end{figure}

Substituting~\eqref{betaprior} and~\eqref{gamma} into~\eqref{beysian}, the posterior $p(\bm{\beta}|\mathbf{y}, \bm{\alpha})$ can be readily verified to follow a complex Gaussian distribution, whose mean and covariance are given by~\cite{twodimention}
\begin{align}
	\bm{\mu}&=\gamma\bm{\Psi} \mathbf{\Phi}^H\mathbf{y}, \label{mu}\\
	\bm{\Psi}&=(\gamma\mathbf{\Phi}^H\mathbf{\Phi}+\bm{\Lambda})^{-1}, \label{Psi} 
\end{align}
where $\bm{\Lambda} \triangleq \text{diag}(\eta^{(t)}_1,\dots,\eta^{(t)}_N)$ and $ \gamma = 1/\sigma_n^{2} $ is the inverse of the noise variance in \eqref{compresssensing} which is assumed to be a known \textit{priori}. 

Given the estimated hyperparameters $\bm{\alpha}$, the maximum a posterior (MAP) estimate of $\bm{\beta}$ is the mean in~\eqref{mu}, given by
\begin{align}
	\hat{\bm{\beta}}=\boldsymbol{\mu}=(\mathbf{\Phi}^{H}\mathbf{\Phi}+\gamma^{-1}\boldsymbol{\Lambda})^{-1}\mathbf{\Phi}^{H}\boldsymbol{y}.
\end{align} 

Here, to obtain the $\hat{\bm{\beta}}$ we only need to estimate the hyperparameter $\bm{\alpha}$.
 So, an expectation–maximization (EM) algorithm is employed to estimate the hyperparameter $\bm{\alpha}$ through alternations between the E- and M-steps.

\textbf{1) E-Step:}
The posterior $p(\bm{\beta} \mid \mathbf{y}, \bm{\alpha}^{(t-1)})$ requires computing $\bm{\mu}^{(t)}$ and $\bm{\Psi}^{(t)}$ as in~\eqref{mu}–\eqref{Psi}. To avoid the costly matrix inversion in~\eqref{Psi}, the Generalized Approximate Message Passing (GAMP) algorithm~\cite{rangan2011generalized} is utilized to obtain approximate $\tilde{\bm{\mu}}^{(t)}$ and $\tilde{\bm{\Psi}}^{(t)}$ given $\bm{\alpha}^{(t-1)}$ ($t = 1, \ldots, t_s$).

\textbf{2) M-Step:} 
Compute $\alpha_n^{(t)}$ by maximizing the lower bound of the posterior probability~$p(\bm{\alpha}|\bm{y})$ also referred to as the Q-function based on the computed $p(\bm{\beta}|\mathbf{y},\bm{\alpha}^{(t-1)})$. Since $\bm{\alpha}$ only affects $\bm{\beta}$, where $p(\bm{\alpha}|\bm{y}) = p(\bm{\alpha}|\bm{\beta})$, the maximization of the Q-function can be expressed as
\begin{align}\label{Mstep}
	\bm{\alpha}^{(t)}&=\arg\max_{\bm{\alpha}} Q(\bm{\alpha}|\bm{\alpha}^{(t-1)}) \nonumber \\
	&= \arg\max_{\bm{\alpha}}\mathbb{E}_{\bm{\beta}|\mathbf{y},\bm{\alpha}^{(t-1)}}\left[\log p(\bm{\alpha}|\bm{\beta})\right] \nonumber \\
	&= \arg\max_{\bm{\alpha}}\mathbb{E}_{\bm{\beta}|\mathbf{y},\bm{\alpha}^{(t-1)}}\left[\log p(\bm{\beta}|\bm{\alpha})p(\bm{\alpha})\right]+c,
\end{align}
where the operator $\mathbb{E}_{\bm{\beta}|\mathbf{y}, \bm{\alpha}^{(t-1)}}[\cdot]$ denotes the expectation with respect to the posterior distribution $p(\bm{\beta}|\mathbf{y}, \bm{\alpha}^{(t-1)})$.
The maximization in \eqref{Mstep} is further rewritten as
\begin{align} \label{upgradealph}
	\alpha_n^{(t)}=\frac{a-1}{0.5\omega_n+b}\quad\forall n,
\end{align}
where
\begin{align}
	\label{omega}
	\omega_n \triangleq 
	\left[ \left(\tilde{\mu}_{n}^{(t)}\right)^{2} + \tilde{\Psi}_{nn}^{(t)} \right]
	+ \rho \sum_{k\in \mathcal{G}_{(n)}} 
	\left[ \left(\tilde{\mu}_{k}^{(t)}\right)^{2} + \tilde{\Psi}_{kk}^{(t)} \right].
\end{align}
with $\tilde{\mu}_n\triangleq \tilde{\bm{\mu}}[n]$ and $\tilde{\Psi}_{nn} \triangleq \tilde{\bm{\Psi}}(n,n)$, 

Finally, when the EM algorithm converges, we obtain $ \hat{\bm{\beta}} = \tilde{\bm{\mu}}^{(t_s)} $.
The proposed 2D-PCSBL based algorithm is summarized in Algorithm~\ref{alg1}.

\addtolength{\topmargin}{0.04in}
\begin{algorithm}[ht]
	\caption{Proposed 2D-PCSBL based Algorithm}
	\label{alg1}
	\KwIn{$\mathbf{y}$, $\mathbf{F}$, $\zeta_a$, $\zeta_e$, threshold $\epsilon$, max-iter $t_{\max}$}
	\KwOut{UPA near-field channel $\mathbf{H}$.}

	\textbf{Dictionary construction}: The modified dictionary $\tilde{\mathbf{D}}_{x}\leftarrow$ \eqref{eq:Dmux}, $\tilde{\mathbf{D}}_{y}\leftarrow$ \eqref{eq:Dmuy}, $\tilde{\mathbf{D}}\leftarrow$ \eqref{Dic}; the measurement matrix  $\bm{\Phi} \leftarrow$ \eqref{Phi}.\\
	\textbf{Initialization}: Initialize $\bm{\alpha}^{(0)}$, $\bm{\beta}^{(0)}$ and set $t\!\leftarrow\!0$.\\
	
	\While{$t<t_{\max}$ and $\|\hat{\bm{\beta}}^t-\hat{\bm{\beta}}^{t-1}\|^2_2 <=\epsilon $}{
	\textbf{E-step}: Estimate the approximate posterior mean $\tilde{\bm{\mu}}^{(t)}$ and covariance $\tilde{\bm{\Psi}}^{(t)}$ via the GAMP algorithm using $\bm{\alpha}^{(t-1)}$.\\
		\textbf{M-step}: Compute the hyperparameter $\bm{\alpha}^{(t)} \leftarrow$~\eqref{upgradealph}~\eqref{omega} using $\tilde{\bm{\mu}}^{(t)}$ and $\tilde{\bm{\Psi}}^{(t)}$.\\
		\textbf{State update}: $t \leftarrow t+1$.
	}
	\textbf{Reconstruction}: Let $\hat{\bm{\beta}} \leftarrow \tilde{\bm{\mu}}^{(t_s)}$, compute $\hat{\mathbf{h}}=\tilde{\mathbf{D}}\hat{\bm{\beta}}$ and $\hat{\mathbf{H}}=\mathrm{reshape}(\hat{\mathbf{h}},N_x,N_y)$.
\end{algorithm}
\vspace{-2mm}
\subsection{Complexity Analysis} 
The computational complexity analysis, encompassing both theoretical evaluation and simulation results, is summarized in Table~\ref{Complexitysimulation}, where $K_p$ and $K_o$ denote the numbers of iterations for the PCSBL algorithm and the OMP algorithm, respectively, and $N_o$ represents the number of atoms of the polar-domain representation.


By leveraging the GAMP framework, the proposed 2D-PCSBL algorithm circumvents the costly matrix inversion in~\eqref{Psi} and thus reduces the overall complexity from $\mathcal{O}(K_p N^3)$~\cite{tipping2001sparse} to $\mathcal{O}(K_p T N)$ compared with the OMP-based algorithm~\cite{vila2013expectation}.
\begin{remark}
It is worthwhile to note that the proposed 2D-PCSBL based algorithm exhibits almost the same computational complexity as the PCSBL algorithm.  
The major difference lies in that, whereas the PCSBL algorithm exploits only the block-sparse structure, the 2D-PCSBL algorithm fully captures the coupling of sparse structures across the horizontal and vertical dimensions of UPA near-field channels.
And so, the overall framework remains unchanged, and the extra computational overhead induced by the iterative update of the 2D block-sparse hyperparameter $\bm{\alpha}$ is negligible.
\end{remark}	
\vspace{-1mm}
\section{Numerical Simulation} 	
In this section, numerical simulations are conducted to evaluate the performance of the proposed method. We compare the proposed 2D-PCSBL based algorithm with several state-of-the-art methods, including the near-field OMP (polar-OMP) algorithm~\cite{Muitipleaccess}, the BOMP algorithm and the PCSBL algorithm. The Polar-OMP algorithm is applied based on the polar-domain dictionary. The BOMP and PCSBL algorithms are based on the derived dictionary $\tilde{\bm{D}}$ in~\eqref{Dic}, and try to exploit the channel's block-sparse structure. Further, the proposed 2D-PCSBL based algorithm aims to fully exploit the 2D block-sparse structure across the horizontal and vertical dimensions of the UPA near-field channels.

Simulation parameters are shown in Table~\ref{tab:simulation_parameters}, where $F_r=0.5\sqrt{D^3/\lambda}$ and $F_R=2D^2/\lambda$ represent Fresnel distance and Rayleigh distance, respectively~\cite{balanis2016antenna}. Here $D$ represents the aperture of antenna. Thus, in the simulation setup, the path distance $r_l$ lies within the entire near-field region. The signal-to-noise ratio (SNR) is defined based on \eqref{receivedmodelY}, i.e.,
\begin{align}
	\text{SNR(dB)}=10\log_{10}(\mathbb{E}\{\| \mathbf{Fh} \|_2^2\} /\mathbb{E}\{\| \mathbf{n} \|_2^2\}).
\end{align}
The performance is evaluated in terms of the normalized mean square error (NMSE), defined as
\begin{align} \label{NMSE}
	\text{NMSE(dB)}=10\log_{10}(\mathbb{E} \{\|\mathbf{H}-\hat{\mathbf{H}}\|^{2}_{\mathbb{F}}/ \|\mathbf{H}\|^2_{\mathbb{F}}\}),
\end{align}
where $\hat{\mathbf{H}}$ is the estimate of $\mathbf{H}$.
\begin{table}[t]
	\centering
	\caption{Simulation Parameters}
	\label{tab:simulation_parameters}
	\renewcommand{\arraystretch}{1.2}
	\vspace{-2mm}
	\begin{tabular}{|l|l|}
		\hline
		\textbf{Parameter} & \textbf{Value / Description} \\ \hline
		Carrier frequency $f_c$ & $0.1~\mathrm{THz}$ \\ \hline
		Bandwidth $B_w$ & $400~\mathrm{MHz}$ \\ \hline
		Array size $(N_x,N_y)$ & $(32,\,256)$ \\ \hline
		Sampling points $T$ & 1024 \\ \hline
		Precoding matrix $\mathbf{F}$ & i.i.d. $\mathcal{CN}(0,1/\sqrt{N})$ \\ \hline
		Number of paths $L$ & $3$ (1 LoS + 2 NLoS) \\ \hline
		NLoS path gain & $13~\mathrm{dB}$ lower than LoS \\ \hline
		AoA $\theta_l$ & $U[-\pi/3,\,\pi/3]$ \\ \hline
		AoD $\phi_l$ & $U[-\pi/3,\,\pi/3]$ \\ \hline
		Path distance $r_l$ & $U[F_r,F_R]$, with $F_r=2.73$, $F_R=99.84$ \\ \hline
		Block size (B) & 6 \\ \hline
	\end{tabular}
	\vspace{-4mm}
\end{table}

 Fig.~\ref{fig:SNRCompare} plots the NMSE performance with respect to SNR. One can see that the NMSE performance of all the algorithms based on the derived reconstructed dictionary $\tilde{\bm{D}}$ improves as the SNR increases. The NMSE performance of the Polar-OMP algorithm first decreases and then reaches a plateau, since it relies on the polar-domain dictionary, which ignores the block sparsity of $\bm{\beta}$ exploited by the derived dictionary $\tilde{\bm{D}}$. Moreover, the proposed 2D-PCSBL based algorithm achieves an NMSE of $-11$ dB at an SNR of $5$ dB, whereas the BOMP algorithm requires at least $10$ dB to attain the same NMSE performance. Furthermore the proposed algorithm exhibits superior robustness in low-SNR scenarios as compared with PCSBL. These improvements stem from the exploitation of the 2D block-sparse structure in UPA near-field channels.

Fig.~\ref{fig:TCompare} illustrates the NMSE performance versus the number of sampling points $T$ at an SNR of $5$ dB. As shown, all algorithms exhibit decreasing NMSE as $T$ increases. Notably, the proposed algorithm achieves an NMSE of $-8.9$ dB with only $T = 512$, significantly outperforming the polar-OMP algorithm and surpassing both BOMP and PCSBL. This demonstrates that exploiting the 2D block-sparsity is the key to reducing the pilots required for channel estimations.


Table~\ref{Complexitysimulation} presents a comparison of the computational complexity for different numbers of sampling points $T$ at $\text{SNR}=5$~dB. The complexity is evaluated in terms of the relative runtime (RT), defined as the ratio of the actual execution time to that of the reference implementation—specifically, the runtime of the polar-OMP algorithm with T = 512 sampling points. One can see that, the proposed 2D-PCSBL based method exhibits comparable computational complexity relative to existing approaches. Moreover, as $T$ increases, the computational complexity gap between the proposed method and the OMP-based algorithms narrows, highlighting the advantage of the proposed method in channel estimation scenarios with high pilot requirements.

\begin{table*}[t]
	\centering
	\fontsize{8.5}{10.5}\selectfont
	\caption{theoretical complexity and Relative runtime (RT) for UPA-ELAA channel estimation.}
	\label{Complexitysimulation}
	\setlength{\tabcolsep}{3pt}
	\renewcommand{\arraystretch}{0.95}
	\resizebox{\linewidth}{!}{%
		\begin{tabular}{|c|c|c|c|c|c|c|c|c|c|}
			\hline
			\multirow{2}{*}{Method} &
			\multirow{2}{*}{Dictionary Used} &
			\multirow{2}{*}{Main computational Steps} &
			\multicolumn{3}{|c|}{Theoretical Complexity} &
			\multicolumn{4}{|c|}{Relative Runtime (RT)} \\ 
			\cline{4-10}
			& & & Step 1 & Step 2 & Overall &
			$T=512$ & $T=768$ & $T=1024$ & $T=1280$ \\
			\hline
			Polar-OMP & Polar-domain & Correlation, LS
			& $\mathcal{O}(K_oTN_p)$ & $\mathcal{O}(TK_o^3)$ & $\mathcal{O}(K_oTN_p)$
			& \textbf{1.00} & 1.68 & 2.80 & \textbf{3.95} \\ \hline
			
			BOMP & Proposed dictionary & Correlation, LS (per block)
			& $\mathcal{O}(K_oTN/B)$ & $\mathcal{O}(TK_o^3/B)$ & $\mathcal{O}(K_oTN/B)$
			& \textbf{0.18} & 0.35 & 0.45 & \textbf{0.64} \\ \hline
			
			PCSBL & Proposed dictionary & E-step, M-step
			& $\mathcal{O}(K_pTN)$ & $\mathcal{O}(K_pN)$ & $\mathcal{O}(K_pTN)$
			& \textbf{2.65} & 3.77 & 4.34 & \textbf{5.06} \\ \hline
			
			Proposed Method & Proposed dictionary & E-step, M-step (2D)
			& $\mathcal{O}(K_pTN)$ & $\mathcal{O}(K_pN)$ & $\mathcal{O}(K_pTN)$
			& \textbf{2.72} & 3.84 & 4.41 & \textbf{5.14} \\ \hline
		\end{tabular}%
	}
	\vspace{-2mm}
\end{table*}

\begin{figure}[t]
	\vspace{-3mm}
    \centering
    \includegraphics[width=0.82\linewidth]{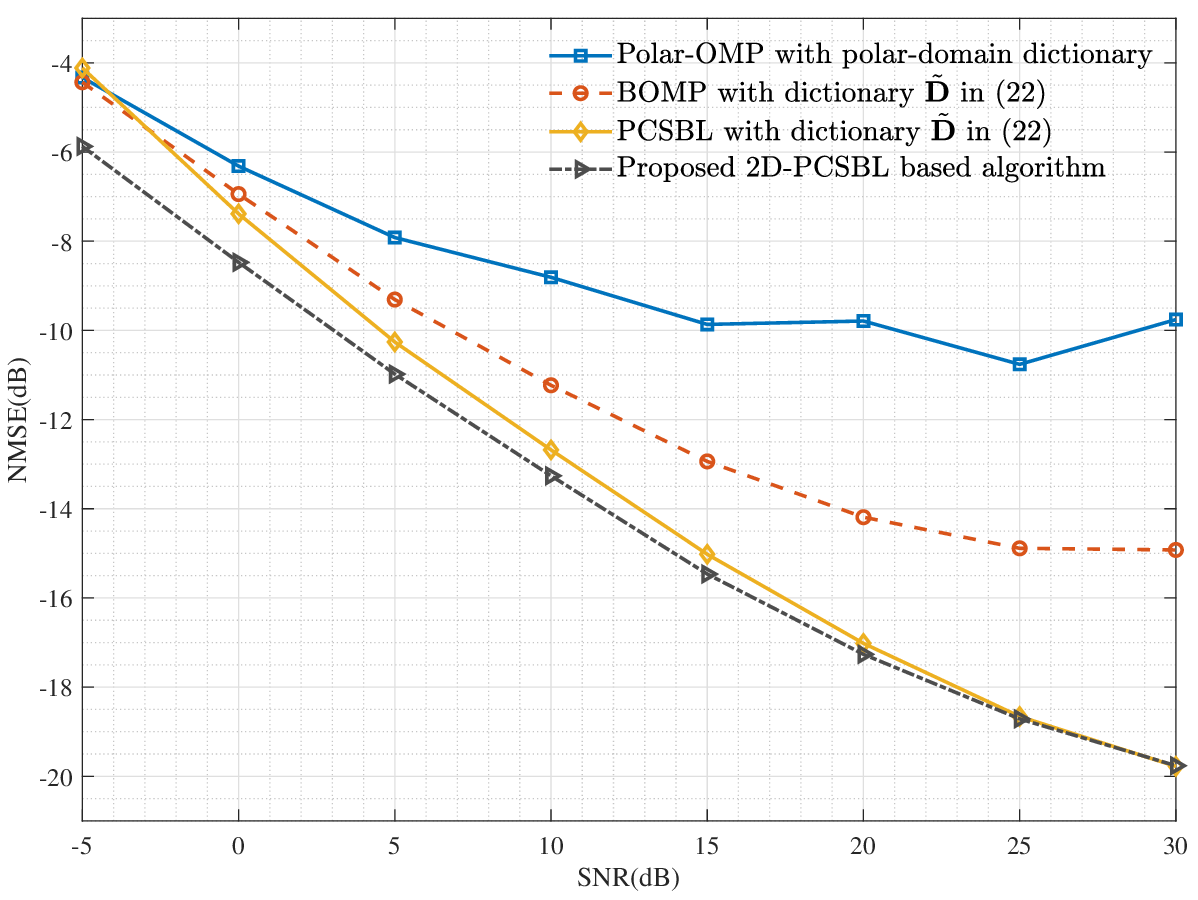}
    \caption{NMSEs of respective algorithms vs. SNR}
    \label{fig:SNRCompare}
\end{figure}

\begin{figure}[t]
	\vspace{-3mm}
    \centering
    \includegraphics[width=0.82\linewidth]{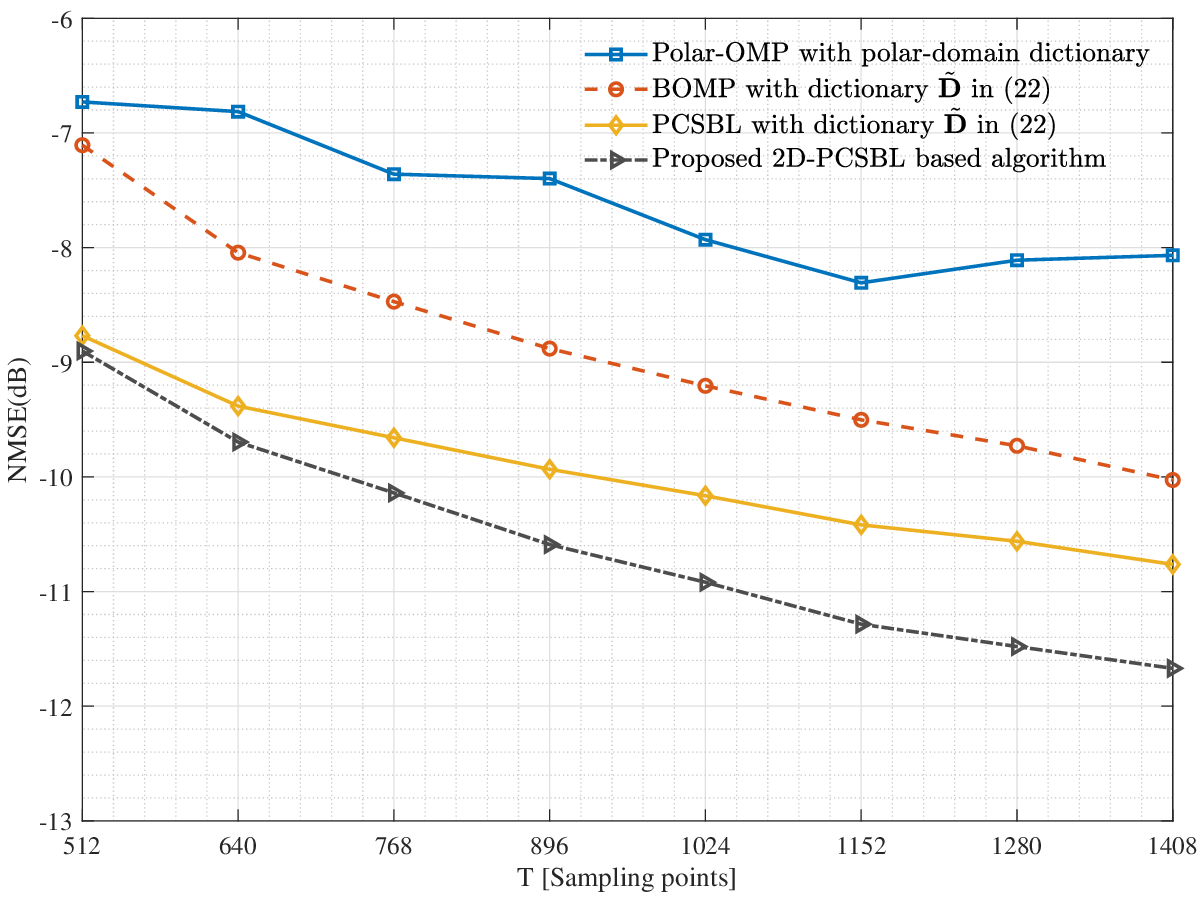}
    \caption{NMSEs of respective algorithms vs. T}
    \label{fig:TCompare}
    \vspace{-3mm}
\end{figure}
\vspace{-2mm}
\section{Conclusions}
 In this work, we have exploited the coupling feature of the sparsity across the horizontal and vertical dimensions of the UPA near-field channels to improve channel estimation accuracy with fewer pilots. Specifically, we first reformulated the UPA near-field channel as an outer product of two ULA near-field channels and designed a modified 2D-DFT matrix as the sparse dictionary for it. Under this dictionary, we have reformulated the original channel estimation problem as a 2D-block sparse recovery problem, and have efficiently solved it by the 2D-PCSBL algorithm. Simulation results show that the aforementioned 2D-block sparse coupling feature is key in improving estimation accuracy and reducing pilots as compared with existing methods.

In future work, we will extend the proposed framework to broadband near-field channels by formulating a three-dimensional block-sparse model across subcarriers, thereby enabling efficient wideband channel estimation. 
\vspace{-1mm}

\bibliographystyle{IEEEtran}
\bibliography{ICC}

\end{document}